\title{Secure Logical Schema and Decomposition Algorithm for Proactive Context Dependent Attribute Based Access Control }
\author{%
{U\u{g}ur Turan{\small $~^{\#1}$}, \.{I}smail Hakk{\i} Toroslu{\small $~^{\#2}$} }%
\vspace{1.6mm}\\
\fontsize{10}{10}\selectfont\itshape
$^{\#}$\,Department of Computer Engineering, Middle East Technical University\\
06800, Ankara, Turkey\\
\fontsize{9}{9}\selectfont\ttfamily\upshape
%
$^{1}$\,ugur.turan@ceng.metu.edu.tr\\
$^{2}$\,toroslu@ceng.metu.edu.tr%

}
\theoremstyle{definition}%
\newtheorem{defn}{Definition}
\newtheorem{thr}{Theorem}
\begin{document}
\maketitle
\begin{abstract} 
Traditional database access control mechanisms use role based methods, with generally row based and attribute based constraints for granularity, and privacy is achieved mainly by using views. However if only a set of views according to policy are made accessible to users, then this set should be checked against the policy for the whole probable query history. The aim of this work is to define a proactive  decomposition algorithm according to the attribute based policy rules and build a secure logical schema in which relations are decomposed into several ones in order to inhibit joins or inferences that may violate predefined privacy constraints. The attributes whose association should not be inferred, are defined as having security dependency among them and they form a new kind of context dependent attribute based policy rule named as security dependent set. The decomposition algorithm works on a logical schema with given security dependent sets and aims to prohibit the inference of the association among the elements of these sets. It is also proven that the decomposition technique generates a secure logical schema that is in compliance with the given security dependent set constraints. 
\end{abstract}

%
\section{Introduction}
Business technology era has increased the importance of logical data storage and retrieval from the point of security, since many users and roles with different access privileges act in the same database environment. As an important topic in security, granularity is also essential in database access control methods. Traditional database security approaches mainly use relation based action rules for users such as allowing querying but disallowing updating the relation and sometimes they also try to define policy rules on attributes to increase granularity\cite{r1}. These approaches have very simple purpose; that is, to determine whether to grant or deny the access, based on the predefined constraints related to the role of the user.  Especially for the attribute based access control; the attributes that are going to be related with each other by executing a query, also taking query history into consideration, are the main factor in making the grant or deny decision of the query. 

For example, let$'$s consider a relation:\newline

\texttt{STUDENT = (\underline{\emph{id}}, \emph{email}, name, surname,} 

\hspace{2.2cm} \texttt{address, age, gender)}\newline

\noindent
on which a  survey about the characteristics of students are being carried out. As an example, the extraction of \texttt{email} and \texttt{gender} relationship should be forbidden in order to preserve the assumed privacy of a student. In addition to that, if \texttt{id} and \texttt{email} fields are two keys and \texttt{id} is selected as primary key, probable join queries should also be checked in order to guarantee that \texttt{email} and \texttt{gender} fields cannot be related with each other by the help of the \texttt{id} attribute. For this case, decomposing the \texttt{STUDENT} relation into two views as:\newline

\texttt{STUDENT$_1$ = (\underline{\emph{id}}, name, surname, address, }

\hspace{2.4cm} \texttt{age, gender)}

\vspace{1.6mm}

\texttt{STUDENT$_2$ = (\underline{\emph{id}}, \emph{email}, name, surname,  }

\hspace{2.4cm} \texttt{address, age)} \newline

is an example of faulty decomposition since \texttt{email} and \texttt{gender} can be related as follows. \newline

\hspace{1cm} \texttt{SELECT s1.gender, s2.email}

\hspace{1cm} \texttt{FROM STUDENT$_1$ s1, STUDENT$_2$ s2}

\hspace{1cm} \texttt{WHERE s1.id = s2.id} \newline

To prevent this kind of queries, a correct decomposition can be given as:\newline

\texttt{STUDENT$_1$ = (name, surname, address, }

\hspace{2.4cm} \texttt{age, gender)}

\vspace{1.6mm}

\texttt{STUDENT$_2$ = (\underline{\emph{email}}, name, surname, }

\hspace{2.4cm} \texttt{address, age)}

\vspace{1.6mm}

\texttt{STUDENT$_3$ = (\underline{\emph{id}}, name, surname, address, }

\hspace{2.4cm} \texttt{age)} \newline

Note that \texttt{STUDENT$_1$} is a keyless relation as it is usual in views and the situation will be discussed in the following sections of the paper. In addition to that, it is assumed there exists no functional dependency other than the ones which make \texttt{email} and \texttt{id} candidate keys for the relation. By this decomposition, the queries which relates \texttt{email} and \texttt{gender} cannot achieve the inference of association among the attributes, since equijoins on keys cannot be done by decomposed relations.

As in this example, a view based solution can be generated to satisfy privacy policy, which is very popular approach in enterprise database systems. There can be several policy rules, and views should be constructed in order to satisfy all constraints of these policies. The need for defining different external layers for different access control policies has increased by web based data sharing trend\cite{r24}. Therefore, a formal approach is needed to build a secure external layer by decomposing the relations into sub relations according to policy rules, in order to generate relevant secure logical schema. 

Most of the research on this kind of security (access control) is mainly focused on dynamic mechanisms employing query investigation or modification methods, and by also tracking the query history \cite{r1, r3, r5, r6, r9, r10}. On the other hand, the strategy used in this paper is to decompose the relation into views in advance, for preventing the time spent by query modification or history tracking operations which may be costly  in high utilized database systems \cite{r22}. To the best of our knowledge, this is the first attempt in the literature to handle privacy for context dependent attribute based access control by a proactive approach. Our method can be easily adapted for validating existing external schema against given attribute based policy rules. The proactive decomposition method described in this paper can easily be combined with other constraints such as row based policy rules during implementation. In addition to that the method can be used in a "Private Record Matching" engine when required context dependent attribute sets are supplied \cite{r13}. 

In the rest of this paper, we first present a formal method to define secure logical schema for preserving context dependent attribute based privacy, then we define a decomposition algorithm that guarantees to produce a secure logical schema. A detailed real life example is also given to clearly show the steps of the algorithm and use of the decomposed relations in sample applications. 

This paper is organized as follows: Section 2 describes the related works in the field of security and privacy in databases. Then, the Section 3 gives the preliminaries and definitions used through the rest of the paper. Section 4 presents the decomposition algorithm that satisfies the access policies, and the proof of the algorithm. The next section, Section 5, contains a real life example for demonstration of the algorithm and Section 6 briefly discusses the future work. Finally Section 7 contains the conclusion.

\section{Related Work}

The field of database security is very popular, and several works in this field have influenced the idea proposed in this paper \cite{r1,r2,r3,r4,r5,r6,r7,r8,r9,r10,r11,r14,r15,r16,r17,r18,r19,r20}. The approach of updating the query dynamically depending on the context and the policy has been studied for a long time in the literature \cite{r5}.  In this method, the query can be modified by adding predicates and the main purpose is the row based security. Adding more predicates to \texttt{where} clause can only restrict the rows extracted by the query \cite{r5}. Actually the security mechanism in \cite{r5} gives user a set of views which are permitted to be queried and then performs row based elimination by adding predicates whose idea can be treated as an additional functionality for the work in this paper. However another work, \cite{r6}, states that the former algorithm is not maximal and limits some permitted answers. In \cite{r6} some flexibility has been added as the query may depend on any view or sub view or meta-relations. That means extra work should be conducted in order to find which permitted views are involved. These two approaches may have performance problems and modifying the query can be costly \cite{r22}, nevertheless it should be noted that their query modification strategies are done mainly for row based access control, whereas this paper focuses on context dependent attribute based access control in a proactive manner. 

In addition to this, Oracle presents Virtual Private Database term \cite{r10} and performs the security totally by query modification on real relations. The modification can be as row based by adding predicates or column based by making null of the unwanted attributes. Bertino \cite{r8} calls this type of query modification approaches as “Truman Models” \cite{r7}, since they answer each time, nevertheless the answer may not be maximal because of restrictions.  These models have simple attribute based policy rules as just checking the existence of attributes in the query result. Beside this, data perturbation \cite{r16} is another run-time consuming method and may be used for Truman Models. In addition to that, “k-anonymity” term \cite{r3} has been proposed to divide the relation to views which are targeted not to extract "\textit{id}"s. Moreover the security policy need not to be satisfying the anonymity only; for instance one can define a policy rule as \textit{gender} and \textit{address} should not be obtained together as even both of them is not adequate for identification \cite{r15}. 

Furthermore, Purpose Oriented Access Control scheme \cite{r2} offers role - purpose - column mapping, however two purposes may serve to another unwanted purpose. For instance let \texttt{a, b, c} to be attributes and purpose-1 needs \texttt{a, b}; purpose-2 needs \texttt{a, c} and non-existing and unwanted purpose-3 needs \texttt{b, c}. In this example first two purposes can serve to the unwanted third purpose. That example presents the notion of query history \cite{r17} whose deep investigation makes the computation costly. To get rid of these, attribute mutability term \cite{r4} provides a mechanism as Chinese-Wall method \cite{r11} with historical data, but performance requirements may be again critical. 

Beside this, “Non-Truman” models have been proposed \cite{r7} which reject the unauthorized query according to the authorized views. Hippocratic databases \cite{r9} combine many security issues stated in this section however the addressed problem in this paper is a bit different. The main problem is to maintain security and privacy in all these works; nevertheless, dynamic security modeling with query modification, attribute mutation, historical query tracking or grant/reject mechanism may have performance problems because of their run time executions. This paper constructs a proactive security mechanism as building an external layer with a secure logical schema to user by a decomposition algorithm in which user is free to query anything on decomposed relations. The term “Attribute-Based” in this paper is used for the ability of defining the access control rules on the attributes of relations. The same term has been used differently in \cite{r20} to build the access control with the help of dedicated attributes. It is important to note that the notion of modeling  access control rules on attributes according to the application semantics is another important work discussed in \cite{r21} which is not in the scope of this paper.

The most relevant study, targeted a similar problem with this work,  is reducing inference control to access control \cite{r1}. However their solution labels the normalized schema relations and the solution is not proactive, only more efficient than query controlling.

\section{Preliminaries and Problem Definition}

In this section, we give the basic terms and concepts used in the paper. This paper has two main objectives, namely, formally defining a secure logical schema which is in compliance with the given security constraints (security dependent sets), and developing a decomposition algorithm which divides relations into sub-relations to be able to satisfy the security constraints. The main reason for decomposition is to prevent obtaining securely dependent attributes together directly in a relation or through a join.

Therefore, first, the definition of the logical schema is given in terms of two sets as relational schema and (non-reflexive and non-partial) functional dependencies. After that, the closures of relations and functional dependencies (again non-reflexive and non-partial)  are defined. The closure of relation schema is very important, since it describes how new relations can be generated using only equijoins on foreign keys. Moreover, the closure of functional dependencies is used to define identifiers for attributes, how they can be inferred, and how two or more attributes can be associated with each other. Combining these definitions, we then define a secure logical schema, which simply prevents obtaining the attributes of each given security dependent set together by joins. We also prove that secure logical schema guarantees that it is not possible to obtain any association among the set of attributes of security dependent set. 

Following these, we define a decomposition operation which decomposes a logical schema according to a given secure dependent sets in order to form a secure logical schema. Afterwards, we prove that the new schema obtained by employing the decomposition operation is secure logical schema, which means that it is not possible to associate attributes of secure dependent sets by joins using the relations constructed after the decomposition. By this way, the inference of association of the attributes together in each security dependent set can be prevented.

\vspace{1.6mm}

\begin{defn}[\textbf{Relational Schema}]
A \textit{relation schema} is defined a set of attribute names concatenated with relation name (using underscore) in order to prevent the vagueness caused by having same attribute name in different relations. For the sake of simplicity, \textit{relation schema} is referred as \textit{relation} and the concatenation on attribute names will not be shown unless needed.
\end{defn}
For example a relation
\begin{multline*}
\texttt{USERS} \texttt{=} \texttt{\{\underline{\emph{id\_users}},} \texttt{name\_users,} \texttt{surname\_users,} \\ \texttt{email\_users\}}
\end{multline*}
is defined as a set of concatenated attribute names. Using this definition, it is guaranteed that all attribute names in a database will be unique owing to unique relation names by default.

\begin{defn}[\textbf{Logical Schema}]
A \textit{logical schema} for a database is defined as a tuple $\mathcal{L=(R,F)}$ such as;

\begin{itemize}
\item	$\mathcal{R}$ is defined as set of all relation schemas in a database. 
\item	$\mathcal{F}$ is defined as set of functional dependencies among attributes in all relation schemas in $R$ excluding reflexive and partial functional dependencies.
\end{itemize}

\end{defn}

Since a foreign and relevant key in two different relations have different names according to the definition of relational schema, the functional dependency in between them is not treated as reflexive and should be in $\mathcal{F}$ as given in the example below. It should be noted that this kind of functional dependencies have a special importance since they are used to perfom equijoins on keys while inference, which will be discussed below. It is important to point out that the functional dependencies which were lost while building $\mathcal{R}$, are not in the scope of this paper; together with the key like behavior for a non-key attribute seen statistically through data for a schema.

\vspace{1.6mm}

For example,

\vspace{1.6mm}

$\mathcal{R} = \{USERS, LOGS\}$ as
\begin{multline*}
\begin{aligned}
\texttt{USERS = \{\underline{\emph{id\_users}}, }&\texttt{name\_users,} \\ &\texttt{surname\_users\}}
\end{aligned}
\end{multline*}
\begin{multline*}
\begin{aligned}
\texttt{LOGS = \{\underline{\emph{userid\_logs}}, }&\texttt{action\_logs,} \\ &\texttt{date\_logs\}}  
\end{aligned}
\end{multline*}

\texttt{userid\_logs} is a foreign key referencing \texttt{USERS(id\_users)}.

\vspace{3mm}

$\mathcal{F} = \left\lbrace 
\begin{aligned}
&\boldsymbol{(userid\_logs \rightarrow id\_users)}, \\
&\boldsymbol{(id\_users \rightarrow userid\_logs)}, \\ 
&(id\_users \rightarrow name\_users),\\
&(id\_users \rightarrow surname\_users), \\
&(userid\_logs \rightarrow action\_logs), \\
&(userid\_logs \rightarrow date\_logs) 
\end{aligned} \right\rbrace$

\vspace{3mm}

The functional dependencies given in bold expresses the dependency between original key and a foreign key. 

Using the join operation, the new relations can be obtained from the relations of logical schema. Similarly, the properties of functional dependencies can be used to generate new functional dependencies from the existing ones. Below, we define two closures for these two.

\begin{defn}[\textbf{$\boldsymbol{\mathcal{F}^+}$: Closure of $\boldsymbol{\mathcal{F}}$}]
Closure set of given $\mathcal{F}$ that can be obtained by using the properties of functional dependencies \cite{r23}, excluding reflexive and partial functional dependencies.
\end{defn}

\begin{defn}[\textbf{$\boldsymbol{\mathcal{R}^+}$: Closure of $\boldsymbol{\mathcal{R}}$}]
\textit{Closure of $\mathcal{R}$} is defined as $\mathcal{R}^+$, composing all probable relation schemas obtained by performing any database query over $\mathcal{R}$ for which join operations used in the query should only be \textit{equijoins on foreign keys} (named \textbf{meaningful join} thereafter) in order not to produce spurious tuples. 
\end{defn}

For example, by using the sample decomposed relations given in introduction section:

\vspace{1.6mm}

\texttt{STUDENT$_1$ = (name, surname, address, }

\hspace{2.4cm} \texttt{age, gender)}

\vspace{1.6mm}

\texttt{STUDENT$_2$ = (\underline{\emph{email}}, name, surname, }

\hspace{2.4cm} \texttt{address, age)}

\noindent
the query which produces spurious tuples can be given as:

\vspace{1.6mm}

\hspace{1cm} \texttt{SELECT *}

\hspace{1cm} \texttt{FROM STUDENT$_1$ s1, STUDENT$_2$ s2}

\hspace{1cm} \texttt{WHERE s1.name = s2.name}

\vspace{1.6mm}

\noindent
The join operation is an equijoin but not on keys, therefore spurious tuples are generated by associating different students having the same \texttt{name}.

Let $\boldsymbol{\mathcal{U_R}}$  be union of all attributes existing in all relations of $\mathcal{R}$ (i.e., $\mathcal{U_R}=\bigcup_{\mathcal{R}_i\in \mathcal{R}}\Set{x| x\in\mathcal{R}_i}$). Rather than defining which relations can be constructed from $\mathcal{R}$ as subsets of $\mathcal{U_R}$, we can specify the set of attributes that cannot be obtained together in $\mathcal{R}^+$ as follows:

\vspace{1.6mm}

\textbf{Property of $\boldsymbol{\mathcal{R}^+}$}:
An attribute set $\mathcal{A}$ cannot be subset of any derived relation schema in $\mathcal{R}^+$, if and only if, $\mathcal{A}$ cannot be a subset of existing relation schema or cannot be functionally dependent to any set of attributes in $\mathcal{U_R}$ so it becomes impossible to relate with equijoins on foreign keys by the definition of $\mathcal{R}^+$. Note that, according to Definition 2 if there is a foreign key relationship, then it is represented as functional dependency as $( ( A_i \rightarrow A_j ) \in\mathcal{F} )$. Since all meaningful joins can only be executed using this kind of functional dependencies, the following logical formula expresses that a set of attributes $\mathcal{A}$ cannot be a subset of any relation in $\mathcal{R}^+$ if and only if, there is no functional dependency relationship to $\mathcal{A}$ in $\mathcal{F}^+$ and there is no $\mathcal{R}$ containing $\mathcal{A}$.

\begin{equation}
\begin{split}
\forall\mathcal{R}_k\in\mathcal{R}^+, \forall \mathcal{A}\subseteq \mathcal{U_R}[ \mathcal{A} \not\subseteq\mathcal{R}_k \Leftrightarrow \forall \mathcal{R}_j\in \mathcal{R} ( \mathcal{A}\not\subseteq \mathcal{R}_j )  \\ \wedge  \forall\mathcal{A}_i\subseteq \mathcal{U_R} ( ( \mathcal{A}_i \rightarrow \mathcal{A} ) \notin\mathcal{F}^+ ) ] 
\end{split}
\end{equation}

As it can be seen from above definitions, there is a strong condition which says that in order not to be able to obtain a subset of attributes from a logical schema it should not be possible to perform a meaningful join. In order to perform meaningful join; foreign keys are used, and they correspond to functional dependencies. Therefore, we need the following definitions to represent these relationships.

\begin{defn}[\textbf{Set of Identifier Sets}]
The \textit{set of identifier set} of an attribute $\mathcal{\alpha}$  for a given $\mathcal{F}$, as $i^\mathcal{F}_\alpha$, is defined as follows:
\begin{equation}
i^\mathcal{F}_\alpha=\Set{x| x\not\supseteq \{\alpha\} \wedge (x \rightarrow \alpha)\in\mathcal{F}^+}
\end{equation}
\end{defn}

Each element of $i^\mathcal{F}_\alpha$ is also called as \textit{identifier set} of attibute $\alpha$.

\vspace{1.6mm}

For example,

\vspace{1.6mm}

\hspace{0.25cm} $\alpha = name$
 
\vspace{1mm}

\hspace{0.25cm} $\mathcal{F}^+ = \left\lbrace 
\begin{aligned}
&id\rightarrow name, \\ 
&id\rightarrow surname, \\
&id\rightarrow age, \\
&id\rightarrow email, \\
&email\rightarrow name, \\ 
&email \rightarrow surname \\
\end{aligned} \right\rbrace$ 

\vspace{1mm}

\hspace{0.25cm} $i^\mathcal{F}_\alpha = \{ \{id\}, \{email\}\}$

\vspace{1.6mm}

The definition could be simply extended for an attribute set $\mathcal{A}$ as follows:
\begin{equation}
\mathcal{I}^\mathcal{F}_\mathcal{A}=\Set{x| x\not\supseteq \mathcal{A} \wedge (x \rightarrow \mathcal{A})\in\mathcal{F}^+}
\end{equation}

These two definitions can be related as for an attribute set $\mathcal{A}$, as $\mathcal{I}^\mathcal{F}_\mathcal{A}$ contains the shared elements in $i^\mathcal{F}_\alpha$ for all $\alpha$ which are attributes in $\mathcal{A}$.
\begin{equation}
\mathcal{I}^\mathcal{F}_\mathcal{A}=\Set{x| \forall\alpha\in \mathcal{A}(x\in i^\mathcal{F}_\alpha)}
\end{equation}

\textbf{Identifiable Property}:
Each attribute of an \textit{identifiable set} (i.e, ($\mathcal{I}^\mathcal{F}_\mathcal{A}\neq\emptyset$)), should be in the same relational schema with at least one of its identifier set. In other words, for a $\mathcal{L=(R,F)}$;
\begin{equation}
\begin{split}
\forall\mathcal{A}\subseteq\mathcal{U_R}(&\mathcal{I}^\mathcal{F}_\mathcal{A} \neq\emptyset \Leftrightarrow \\ &\forall \alpha \in \mathcal{A}, \exists \mathcal{R}_i\in\mathcal{R}, \exists\mathcal{D}\in i^\mathcal{F}_\alpha ((\alpha \cup \mathcal{D}) \subseteq \mathcal{R}_i))
\end{split}
\end{equation}

The same property can be thought as it is impossible to identify an attribute in a logical schema if the attribute does not have any identifier set in its relation schema, which makes it impossible to discover other identifier sets by using meaningful joins. This issue is also a matter of database normalization however in this paper no assumption about the normal form of database has been done.

\begin{defn}[\textbf{Inferability}]
A set of attributes $\mathcal{A}_1\subseteq\mathcal{U_R}$ can be \textit{inferable} from a set of attributes $\mathcal{A}_2\subseteq\mathcal{U_R}$ for a given $\mathcal{L=(R,F)}$, shown as $\mathcal{A}_1\overset{\mathcal{F}}{\rightrightarrows}\mathcal{A}_2$, as defined below.
\begin{equation}
\forall\mathcal{A}_1\forall\mathcal{A}_2((\mathcal{A}_1\overset{\mathcal{F}}{\rightrightarrows}\mathcal{A}_2)\Leftrightarrow((\mathcal{A}_1\rightarrow\mathcal{A}_2)\in\mathcal{F}^+))
\end{equation}
\end{defn}

The definition of inferability is given by using closure set of functional dependencies since relation based key constraints may not be adequate as there may be functional dependencies among non-prime attributes in a schema which is not normalized.

\begin{defn}[\textbf{Inference of Association among a Set of Attributes}]
For a given $\mathcal{L=(R,F)}$, the inference of association among a set of attributes $\mathcal{A}\subseteq\mathcal{U_R}$ ,shown as $\mathcal{X}_\mathcal{L}(A)$, means that either $\mathcal{A}$ should be inferable from a subset of $\mathcal{U_R}$ or be subset of any existing relation schema in order to be associated. More formally:
\begin{equation}
\mathcal{X}_\mathcal{L}(A) \Leftrightarrow \exists\mathcal{A}_i\subseteq\mathcal{U_R}(\mathcal{A}_i\overset{\mathcal{F}}{\rightrightarrows}\mathcal{A})\vee\exists\mathcal{R}_i\in\mathcal{R}(\mathcal{A}\subseteq\mathcal{R}_i)
\end{equation}
\end{defn}

In this paper, the set of attributes are defined as to have \textbf{security dependency} among them if the inference of association among them should be prevented. In addition to that, the purpose of this paper is to inhibit the inference of association among a given subset of $\mathcal{U_R}$ with at least two attributes (each named a \textbf{security dependent set} thereafter) for a logical schema $\mathcal{L=(R,F)}$ by building a \textbf{secure logical schema}.

\begin{defn}[\textbf{Secure Logical Schema}]
A secure logical schema is a logical schema $\mathcal{L}_\mathcal{S}^{sec} = (\mathcal{R}, \mathcal{F})$ such that for a given set of security dependent sets $\mathcal{S}$, there should not be any relation in $\mathcal{R}^+$, containing the attributes of any set in $\mathcal{S}$. Formally:
\begin{equation}
\mathcal{L}_\mathcal{S}^{sec} \Leftrightarrow \forall \mathcal{S}_i \in \mathcal{S}, \nexists\mathcal{R}_i \in \mathcal{R}^+(\mathcal{S}_i \subseteq \mathcal{R}_i)
\end{equation}
\end{defn}

It should be emphasized that by the definition of $\mathcal{R}^+$, only meaningful joins are taken into consideration as queries should only have equijoins on foreign keys and by this way spurious tuples cannot be generated. 

By the definition of secure logical schema, it can be stated that the inference of association among attributes of each security dependent set is impossible with a secure logical schema since the attributes of any security dependent sets cannot be functionally dependent to any subset of attributes in logical schema (excluding reflexive and partial dependencies as given in Definition 2) or in the same relation as given as a theorem below.

\begin{thr}
The inference of association among attributes of each security dependent set cannot be performed in secure logical schemas; that is,
\begin{equation}
\forall \mathcal{S}_i \in \mathcal{S}(\neg\mathcal{X}_{\mathcal{L}_\mathcal{S}^{sec}}(\mathcal{S}_i))
\end{equation}
\end{thr}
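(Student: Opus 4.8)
The plan is to prove the statement by a direct definition-chase: I will show that, once Definitions 6 and 7 are unfolded, the negation $\neg\mathcal{X}_{\mathcal{L}_\mathcal{S}^{sec}}(\mathcal{S}_i)$ is literally the right-hand side of the $\mathcal{R}^+$ property (Equation 1), whose left-hand side is exactly the guarantee furnished by the secure-logical-schema Definition 8.

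First I would fix an arbitrary $\mathcal{S}_i \in \mathcal{S}$ and invoke the hypothesis that $\mathcal{L}_\mathcal{S}^{sec}$ is secure. By Definition 8 this yields $\forall \mathcal{R}_k \in \mathcal{R}^+ (\mathcal{S}_i \not\subseteq \mathcal{R}_k)$, which is precisely the antecedent $\mathcal{A} \not\subseteq \mathcal{R}_k$ (for all $\mathcal{R}_k \in \mathcal{R}^+$) appearing on the left of Equation 1 with $\mathcal{A}$ instantiated to $\mathcal{S}_i$.

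Next I would apply the left-to-right direction of the biconditional in Equation 1. It turns the fact just obtained into the conjunction
\[
\forall \mathcal{R}_j \in \mathcal{R} (\mathcal{S}_i \not\subseteq \mathcal{R}_j) \;\wedge\; \forall \mathcal{A}_j \subseteq \mathcal{U_R} ((\mathcal{A}_j \rightarrow \mathcal{S}_i) \notin \mathcal{F}^+),
\]
i.e.\ no existing relation contains $\mathcal{S}_i$ and no attribute set functionally determines $\mathcal{S}_i$ in $\mathcal{F}^+$. Finally, negating Definition 7 and substituting the inferability equivalence of Definition 6 gives $\neg\mathcal{X}_{\mathcal{L}_\mathcal{S}^{sec}}(\mathcal{S}_i) \Leftrightarrow \forall \mathcal{A}_j \subseteq \mathcal{U_R}((\mathcal{A}_j \rightarrow \mathcal{S}_i) \notin \mathcal{F}^+) \wedge \forall \mathcal{R}_j \in \mathcal{R}(\mathcal{S}_i \not\subseteq \mathcal{R}_j)$, which is the same conjunction up to the order of the two conjuncts. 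Since $\mathcal{S}_i$ was arbitrary, the universally quantified claim follows.

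Because the argument is pure bookkeeping, I do not expect a deep obstacle; the only delicate point is the logical structure of Equation 1. I must be sure to use its left-to-right implication and to instantiate the bound set $\mathcal{A}$ with $\mathcal{S}_i$ cleanly, given that the universal quantifier over $\mathcal{R}_k \in \mathcal{R}^+$ is printed outside the biconditional. I would also verify that the exclusion of reflexive and partial dependencies from $\mathcal{F}^+$ is respected identically in both Equation 1 and Definition 6, so that the two conjunctions match exactly, and that treating $\mathcal{S}_i$ as a subset of $\mathcal{U_R}$ is legitimate for a genuine security dependent set.
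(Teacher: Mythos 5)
Your proposal is correct and takes essentially the same route as the paper's own Appendix proof: both pass from Definition 8 through the left-to-right direction of the $\mathcal{R}^+$ property (Equation 1) to the conjunction that no relation in $\mathcal{R}$ contains $\mathcal{S}_i$ and no attribute set in $\mathcal{U_R}$ functionally determines $\mathcal{S}_i$ in $\mathcal{F}^+$, which by Definitions 6 and 7 is exactly $\neg\mathcal{X}_{\mathcal{L}_\mathcal{S}^{sec}}(\mathcal{S}_i)$. The only difference is presentational --- the paper packages the final step as a proof by contradiction refuting each disjunct of Definition 7 separately, while you assert the De Morgan dual directly --- and your noted repair of the quantifier over $\mathcal{R}_k \in \mathcal{R}^+$ being printed outside the biconditional in Equation 1 is the same intended reading the paper itself uses.
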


\begin{proof}
The formal proof given in Appendix briefly states that in order to perform the disallowed inference, either the attributes should be in the same relation or a meaningful join should be done for an inter relation inference as both cases are impossible because of secure logical schema definition.
\end{proof}

\vspace{1.6mm}

The next step is to define transformation of a logical schema to a secure logical schema for given security dependency sets.

\begin{defn}[\textbf{Secure Decomposition}]
A \textit{secure decomposition} is decomposition of $\mathcal{L=(R,F)}$ according to the set of security dependency sets $\mathcal{S}$ to a new logical schema $\mathcal{L}_\mathcal{S}^{'} = (\mathcal{R'}, \mathcal{F'})$ , having the following features:

\begin{enumerate}

\item
Any attribute should not be lost after decomposition. In other words:
\begin{equation}
\mathcal{U_R}=\mathcal{U_{R'}}
\end{equation}

\item
The new set of functional dependencies should be subset of existing set of functional dependencies as there can't be any new functional dependency moreover a loss in existing functional dependencies is expected to inhibit the inference of associations among the elements of security dependent sets.
\begin{equation}
(\mathcal{F'}\subseteq\mathcal{F}) \wedge (\mathcal{F'}^+\subseteq\mathcal{F}^+)
\end{equation}

\item
Any of the decomposed relations should not be a superset of any security dependent set.
\begin{equation}
\forall \mathcal{S}_i \in \mathcal{S}, \nexists\mathcal{R}_i\in\mathcal{R'}(\mathcal{S}_i\subseteq\mathcal{R}_i)
\end{equation}

\item
Any of the attributes in a security dependent set should not coexist in the same decomposed relation with any of its identifier set.
\begin{equation}
\forall \mathcal{S}_i \in \mathcal{S}, \forall \mathcal{R}_i\in\mathcal{R'}, \forall \sigma \in \mathcal{S}_i, \nexists \tau \in i^\mathcal{F}_\sigma ((\{\sigma\}\cup\tau)\subseteq\mathcal{R}_i)
\end{equation}

\end{enumerate}
\end{defn}

The fourth property of secure decomposition is a strong requirement since it makes all the attributes in any security dependent set, uninferrable after the decomposition. It should be noted that this property is a requirement for a totally proactive solution. If any mechanism intends to have proactive and run time components together, then this requirement can be relaxed.

The aim of the secure decomposition is to transform a logical schema to a secure logical schema by the help of security dependency sets, which is given as a theorem below. 

\begin{thr}
If $\mathcal{L}_\mathcal{S}^{'} = (\mathcal{R'}, \mathcal{F'})$ is the logical schema obtained after performing secure decomposition to $\mathcal{L} = (\mathcal{R}, \mathcal{F})$ with the set of security dependency sets $\mathcal{S}$, then $\mathcal{L}_\mathcal{S}^{'}$ is a $\mathcal{L}_\mathcal{S}^{sec}$.
\end{thr}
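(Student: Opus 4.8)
The plan is to show directly that $\mathcal{L}_\mathcal{S}^{'}=(\mathcal{R'},\mathcal{F'})$ satisfies the defining condition of a secure logical schema (Equation~8), namely that for every $\mathcal{S}_i\in\mathcal{S}$ there is no $\mathcal{R}_k\in\mathcal{R'}^+$ with $\mathcal{S}_i\subseteq\mathcal{R}_k$. Since the \textbf{Property of $\mathcal{R}^+$} (Equation~1) holds for any logical schema, I would instantiate it on $\mathcal{L}_\mathcal{S}^{'}$ with the attribute set $\mathcal{A}=\mathcal{S}_i$. The left-hand side of that property is exactly the statement we want, so the whole task reduces to verifying its right-hand side for an arbitrary fixed $\mathcal{S}_i$: that $\mathcal{S}_i$ is contained in no relation of $\mathcal{R'}$, \emph{and} that there is no $\mathcal{A}_k\subseteq\mathcal{U_{R'}}$ with $(\mathcal{A}_k\rightarrow\mathcal{S}_i)\in\mathcal{F'}^+$.

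The first conjunct is immediate, being precisely the third property of the secure decomposition (Equation~12), which guarantees $\nexists\mathcal{R}_i\in\mathcal{R'}(\mathcal{S}_i\subseteq\mathcal{R}_i)$. The second conjunct is equivalent to $\mathcal{I}^{\mathcal{F'}}_{\mathcal{S}_i}=\emptyset$, i.e.\ that $\mathcal{S}_i$ possesses no identifier set under $\mathcal{F'}$, and this is where the real content of the theorem lies.

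For the second conjunct I would argue by contradiction. Assume $\mathcal{I}^{\mathcal{F'}}_{\mathcal{S}_i}\neq\emptyset$. Applying the \textbf{Identifiable Property} (Equation~5) to the schema $\mathcal{L}_\mathcal{S}^{'}$ then forces every $\sigma\in\mathcal{S}_i$ (and such $\sigma$ exist, since each security dependent set has at least two attributes) to coexist, inside some relation $\mathcal{R}_k\in\mathcal{R'}$, with at least one of its identifier sets $\mathcal{D}\in i^{\mathcal{F'}}_\sigma$, so that $(\{\sigma\}\cup\mathcal{D})\subseteq\mathcal{R}_k$. The crucial linking step is to transport this conclusion from the decomposed dependencies $\mathcal{F'}$ back to the original dependencies $\mathcal{F}$: the second property of the secure decomposition gives $\mathcal{F'}^+\subseteq\mathcal{F}^+$ (Equation~11), so any $(x\rightarrow\sigma)\in\mathcal{F'}^+$ also lies in $\mathcal{F}^+$, whence $i^{\mathcal{F'}}_\sigma\subseteq i^\mathcal{F}_\sigma$ and in particular $\mathcal{D}\in i^\mathcal{F}_\sigma$. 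But then $(\{\sigma\}\cup\mathcal{D})\subseteq\mathcal{R}_k$ with $\mathcal{D}\in i^\mathcal{F}_\sigma$ directly contradicts the fourth property of the secure decomposition (Equation~13), which forbids any attribute of a security dependent set from sharing a decomposed relation with any of its (original) identifier sets. This contradiction yields $\mathcal{I}^{\mathcal{F'}}_{\mathcal{S}_i}=\emptyset$, establishing the second conjunct and hence the theorem.

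I expect the main obstacle to be exactly this bridge between the two identifier-set notions: the Identifiable Property naturally produces identifier sets computed in the decomposed schema (under $\mathcal{F'}$), whereas the fourth decomposition property is phrased in terms of the original $\mathcal{F}$. The inclusion $i^{\mathcal{F'}}_\sigma\subseteq i^\mathcal{F}_\sigma$, justified by $\mathcal{F'}^+\subseteq\mathcal{F}^+$, is the linchpin that lets the Identifiable Property and the fourth decomposition property speak to one another; once it is in place, everything else is routine bookkeeping through the Property of $\mathcal{R}^+$ and the definition of identifier sets.
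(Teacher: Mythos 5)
Your proof is correct and follows essentially the same route as the paper's: the same case split between $\mathcal{S}_i$ being contained in a relation of $\mathcal{R'}$ (ruled out by the third decomposition property, Equation~12) and $\mathcal{S}_i$ being inferable (ruled out by combining $\mathcal{F'}^+\subseteq\mathcal{F}^+$, the Identifiable Property, and the fourth decomposition property, Equation~13). The only cosmetic difference is the order of the bridging step—you apply the Identifiable Property inside $\mathcal{L}_\mathcal{S}^{'}$ and then transport the per-attribute witnesses via $i^{\mathcal{F'}}_\sigma\subseteq i^{\mathcal{F}}_\sigma$, whereas the paper first transports the inferring set $\mathcal{A}_n$ into $\mathcal{F}^+$ and then invokes the contrapositive of the Identifiable Property over the decomposed relations—so the ingredients and the contradiction reached are identical.
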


\begin{IEEEproof}
The formal proof given in Appendix briefly states that in order not to be a secure logical schema, a security dependent set should be inferable or should be a part of an original relation. The former is impossible as the attributes of security dependent sets cannot be in the same relation with any of their identifier sets by the fourth property of the definition of secure decomposition. It is also impossible for the latter due to the third property of secure decomposition.

\end{IEEEproof}

\section{Decomposition Algorithm}

The main purpose of the decomposition algorithm is to achieve the secure decomposition (Definition 9) which is defined as resulting in a secure logical schema. In order to satisfy the goal, it is clear that the elements of each security dependent set should not be in the same sub-relation obtained after the decomposition of original relations. Furthermore, it should not be possible to meaningfully join two sub-relations containing securely dependent attributes separately. Below we define an algorithm which exhaustively generates all the subsets of the attributes of all relations and eliminates the ones that do not satisfy the conditions mentioned above. After that, it also eliminates redundant sub-relations.

\vspace{1.6mm}

Secure decomposition algorithm for the $\mathcal{L=(R,F)}$ with the given security dependencies set $\mathcal{S}$ is given in Algorithm 1.

\begin{algorithm}
\caption{Decomposition Algorithm}
\label{array-sum}

\begin{algorithmic}[1]
\Require
 \Statex $\mathcal{L}$: logical schema as $\mathcal{(R, F)}$,
 \Statex $\mathcal{S}$: set of security dependent sets for $\mathcal{L}$
\Ensure
 \Statex $\mathcal{P_R}$: set of maximal subsets of $\mathcal{R}$ according to $\mathcal{S}$
    \State \textbf{begin}
    \State $\mathcal{P_R} = \emptyset$
    \For {each $\mathcal{R}_x$ in $\mathcal{R}$}
        \State $\mathcal{P_R}_x$ = Power Set of $\mathcal{R}_x$
        \For {each $\mathcal{S}_i$ in $\mathcal{S}$}
			\For {each $\mathcal{Z}_i$ in $\mathcal{P_R}_x$}
				\If {$\mathcal{S}_i\subseteq\mathcal{Z}_i$}
        				\State remove $\mathcal{Z}_i$ from $\mathcal{P_R}_x$
    				\EndIf
				\For {each $\alpha$ in $\mathcal{S}_i$}
					\For {each $\lambda$ in $i^\mathcal{F}_\alpha$}
						\If {$(\{\alpha\}\cup\lambda)\subseteq\mathcal{Z}_i$}
							\State remove $\mathcal{Z}_i$ from $\mathcal{P_R}_x$
						\EndIf
					\EndFor
				\EndFor
			\EndFor
	   \EndFor
	   \For {each $\mathcal{V}_i$ in $\mathcal{P_R}_x$}
			\For {each $\mathcal{W}_i$ in $\mathcal{P_R}_x$}
				\If {$\mathcal{V}_i\subseteq\mathcal{W}_i$}
							\State remove $\mathcal{V}_i$ from $\mathcal{P_R}_x$
				\EndIf
			\EndFor
	   \EndFor
	   \State $\mathcal{P_R} = \mathcal{P_R} \cup \mathcal{P_R}_x$
    \EndFor
	\State \textbf{return} $\mathcal{P_R}$
	\State\textbf{end}
\end{algorithmic}
\end{algorithm}

The secure decomposing algorithm works as follows:

\vspace{1.6mm}

For all relational schemas in $\mathcal{R}$,

\vspace{1.6mm}
\begin{enumerate}
\item
Firstly, powerset of the a relational schema is generated, which is called as $\mathcal{P_R}_x$ in the algorithm (line (4)).
\item
Then, for each security dependency set in $\mathcal{S}$ (line (5)) each element of $\mathcal{P_R}_x$ (line (6)) is processed. The set is eliminated if: 
\begin{itemize}
\item
it contains all attributes of that security dependent set together (lines (7-9)), or,
\item
it contains one of the attributes of the security dependent set with the attribute's any identifier set together (lines (10-16))
\end{itemize}
\item
After that, among the remaining subsets; redundant ones (used for unnecessary sub-relations composed by other sub-relations) are also eliminated (lines (19-25)).
\end{enumerate}

The elimination strategy is aimed to create a secure logical schema. It is important to note that all of the work in this paper is concentrated on security dependent sets. Actually there may be some basic policy rules as a single attribute should not be accessed in any context and these basic cases can be easily handled with simple extensions to the algorithm. However it is left as a future work to define a complete mechanism.

\begin{thr}
Decomposition algorithm performs secure decomposition on given $\mathcal{L=(R,F)}$ for a given $\mathcal{S}$. 
\end{thr}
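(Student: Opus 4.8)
The plan is to read the algorithm's return value $\mathcal{P_R}$ as the relation set $\mathcal{R'}$ of the decomposed schema, and to take $\mathcal{F'}$ to be the set of those dependencies of $\mathcal{F}$ whose attributes still occur together inside some relation of $\mathcal{R'}$. With this reading, proving the theorem reduces to verifying, one by one, the four defining properties of a secure decomposition (Definition 9) for the pair $(\mathcal{R'},\mathcal{F'})$. I would organize the whole argument around two elementary structural observations about the algorithm: first, that every subset which is deleted is deleted precisely because it matches one of the two elimination tests on lines (7)--(9) or (10)--(16); and second, that the redundancy loop on lines (19)--(25) only ever removes sets, never enlarging or creating them, and that it retains the maximal surviving subsets of each $\mathcal{R}_x$.

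Two of the four properties then follow almost directly from the first observation. For Property 3, lines (7)--(9) delete every $\mathcal{Z}_i$ with $\mathcal{S}_i\subseteq\mathcal{Z}_i$, so after the elimination phase no surviving subset of any $\mathcal{R}_x$ contains a whole security dependent set; since the redundancy loop and the final union only drop or copy sets, no relation of $\mathcal{R'}$ can be a superset of any $\mathcal{S}_i$. For Property 4, lines (10)--(16) delete every $\mathcal{Z}_i$ containing $\{\alpha\}\cup\lambda$ for some $\alpha\in\mathcal{S}_i$ and some identifier set $\lambda\in i^\mathcal{F}_\alpha$; because the algorithm uses exactly the identifier sets $i^\mathcal{F}_\alpha$ of the original $\mathcal{F}$ that also appear in Property 4, the same survival argument shows no relation of $\mathcal{R'}$ can hold an attribute of a security dependent set together with one of that attribute's identifier sets.

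Property 2 I would dispatch by noting that decomposition here is a projection: restricting each relation to a subset of its attributes can only lose dependencies of $\mathcal{F}$, never manufacture a new one, so by the very definition of $\mathcal{F'}$ we obtain $\mathcal{F'}\subseteq\mathcal{F}$ immediately; the inclusion $\mathcal{F'}^+\subseteq\mathcal{F}^+$ is then a consequence of the monotonicity of the closure operator under Armstrong's axioms, and the exclusion of reflexive and partial dependencies is inherited from $\mathcal{F}$.

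The part needing genuine care, and which I expect to be the main obstacle, is Property 1 (no attribute is lost), because the elimination phase is aggressive and it is not a priori clear that $\mathcal{U_{R'}}=\mathcal{U_R}$ still holds. Here the key lemma I would establish is that a singleton $\{\alpha\}$ is never deleted by the elimination phase: it cannot contain any $\mathcal{S}_i$, since every security dependent set has at least two attributes; and it cannot contain any $\{\beta\}\cup\lambda$, since $\lambda\in i^\mathcal{F}_\beta$ forces $\beta\notin\lambda$ (by the condition $x\not\supseteq\{\beta\}$ in the identifier-set definition) while $\lambda\neq\emptyset$ (there is no dependency in $\mathcal{F}$ with empty left-hand side), so that $\{\beta\}\cup\lambda$ always has at least two elements. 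Hence for every attribute $\alpha$ of each $\mathcal{R}_x$ the singleton $\{\alpha\}$ survives into the redundancy loop, and by the second structural observation that loop keeps maximal subsets, so $\alpha$ still lies in at least one retained set; taking the union over all $\mathcal{R}_x$ recovers every attribute. The one point I would be careful to state cleanly is that the cross-relation union at the end may leave globally redundant relations, but this does not threaten any of the four properties --- it only means $\mathcal{R'}$ need not be globally maximal --- so the four verifications above together establish that the algorithm realizes a secure decomposition.
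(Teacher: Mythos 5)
Your proposal is correct and takes essentially the same route as the paper's proof: a direct, property-by-property verification of the four conditions of Definition 9 against the algorithm's elimination tests, with Property 1 resting on the same observation that singletons survive the elimination phase because security dependent sets and attribute--identifier-set unions each contain at least two elements (reflexive dependencies being excluded). Your one refinement is to note explicitly that the redundancy loop of lines (19)--(25) may still delete a singleton contained in a surviving superset, so the attribute persists via that maximal set --- a point the paper's terser argument glosses over but which changes nothing in the approach.
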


\begin{IEEEproof}
For the proof, we should revisit the properties of secure decomposition given in Definition 9. Each explanation is used in the same item number with the definition.
\begin{enumerate}
\item
Any attribute cannot be lost after decomposition algorithm as the algorithm cannot remove one element subsets of each relation since;
\begin{itemize}
\item
Security dependency sets should have at least two elements by its definition and cannot be included by a one element subset.
\item
An attribute in a security dependent set cannot be its identifier as reflexive functional dependencies are excluded in the definition of $\mathcal{F}^+$ and the set of identifier set thereby. So again, at least two element set (attribute and its identifier) cannot be subset of one element subset.
\end{itemize}
\item
Any new functional dependencies cannot be presented besides some existing ones may be lost because some subsets of each relation schema are eliminated.
\item
Subsets containing security dependent sets are eliminated (lines (7-9)).
\item
All subsets containing an attribute from a security dependent set and it's any identifier set are eliminated (lines (10-16)).
\end{enumerate}
\end{IEEEproof}

The following parameters of $\mathcal{L=(R,F)}$ and $\mathcal{S}$ affect the performance of decomposition algorithm.

\begin{itemize}
\item
$\pi$ : $\#relations \in \mathcal{R}$

\vspace{1.6mm}

\item
$\epsilon$ : $max_{\mathcal{R}_i \in \mathcal{R}}\{\left\vert\mathcal{R}_i\right\vert\}$

\vspace{1.6mm}

\item
$\eta$ : $max_{\mathcal{S}_i \in \mathcal{S}}\{\left\vert\mathcal{S}_i\right\vert\}$

\vspace{1.6mm}

\item
$\mu$ : $max_{\mathcal{\alpha} \in \mathcal{S}_i(\mathcal{S}_i \in \mathcal{S})}\{\left\vert i^\mathcal{F}_{\alpha} \right\vert\}$
\end{itemize}

The algorithm works at a cost of $\mathcal{O}(\pi \cdot 2^\epsilon \cdot \eta \cdot \mu)$. The problem is relevant to generating maximal independent sets problem \cite{r12} in an undirected graph, in which attributes can be thought as vertices and dependencies as edges. In \cite{r12} it is shown that generating maximal independent sets is NP-Hard. 

It is important to note that being a proactive solution, the exponential complexity of decomposition algorithm is not a critical problem since it is only executed once as preprocessing phase. 

Another point for the decomposition is that, this may lead to relations with no key and because of that reason, duplicate rows may occur in views. Handling mechanism for duplicate rows can change up to the implementation strategy, and keyless relations are not a problem since they are a fact of anonymity.

\section{Real Life Example}

Consider a retail store database with a logical schema $\mathcal{L=(R,F)}$ as the following three relations in $\mathcal{R}$:

\vspace{1.6mm}

$\mathcal{R} = \{CUSTOMER, PRODUCT, BUY\}$ 

\vspace{1.6mm}

$CUSTOMER$ table is used for storing customer details, $PRODUCT$ table is for product information and $BUY$ relation stores the purchase transactions of customers. It should be noted that many tables and attributes that can be useful for a retail store has been omitted not to overcomplicate the example.

\vspace{1.6mm}

The relation schemas are given as:

\vspace{3mm}

\hspace{0.25cm} $CUSTOMER = \left\lbrace 
\begin{aligned}
&\underline{customerId(cid)}, \\
&name, \\ 
&surname, \\
&phoneNumber(pNo), \\
&address,  \\
&age, \\
&gender 
\end{aligned} \right\rbrace$

\vspace{3mm}

\hspace{0.6cm} $PRODUCT = \left\lbrace 
\begin{aligned}
&\underline{productId(pid)}, \\
&name, \\ 
&model, \\
&year, \\
&price
\end{aligned} \right\rbrace$

\vspace{3mm}

\hspace{1.7cm} $BUY = \left\lbrace 
\begin{aligned}
&\underline{customerId}, \\
&\underline{productId}, \\
&\underline{date}, \\ 
&quantity
\end{aligned} \right\rbrace$

\vspace{3mm}

$\mathcal{F} = \left\lbrace 
\begin{aligned}
&cid\_customer \rightarrow name\_customer, \\ 
&cid\_customer \rightarrow surname\_customer, \\
&cid\_customer \rightarrow pNo\_customer, \\
&cid\_customer \rightarrow address\_customer, \\
&cid\_customer \rightarrow age\_customer, \\ 
&cid\_customer \rightarrow gender\_customer, \\
&pid\_product \rightarrow name\_product, \\
&pid\_product \rightarrow model\_product, \\
&pid\_product \rightarrow year\_product, \\
&pid\_product \rightarrow price\_product, \\
&\left( 
\begin{aligned}
&cid\_buy \\
&pid\_buy \\
&date\_buy 
\end{aligned}
\right)
\rightarrow quantity\_buy, \\
& \boldsymbol{cid\_customer \rightarrow cid\_buy}, \\
& \boldsymbol{cid\_buy \rightarrow cid\_customer}, \\
& \boldsymbol{pid\_product \rightarrow pid\_buy}, \\
& \boldsymbol{pid\_buy \rightarrow pid\_product}
\end{aligned} \right\rbrace$ 

\vspace{3mm}

In addition to these, $\mathcal{F}$ is defined as a set of functional dependencies. As given in Definition 2, $\mathcal{F}$ includes dependencies for the foreign keys which are given in bold. It is important that each foreign key based functional dependency always exists with its symmetric pair, since the foreign and its relevant key are the same attribute.

Sample tuples of these three relations are illustrated in Tables I, II, and III respectively. The fields $customerId$ and $productId$ are the keys of the $CUSTOMER$ and $PRODUCT$ relations respectively, and they form a composite key in the $BUY$ relation together with the $date$ attribute. 
 
{\renewcommand{\arraystretch}{2}%
\begin{table}[!h]
\centering
    \caption{$CUSTOMER$ Relation Sample Data} \begin{small}
    \begin{tabular}{|c|c|c|c|c|c|c|}
    \hline
    {\bfseries cid} & {\bfseries  name} & {\bfseries surname} & {\bfseries pNo}  & {\bfseries address}  & {\bfseries age}  & {\bfseries gender}          \\
    \hline
    1         	& John		& Doe		& 5555555		& NYC	& 21	& M \\
        
    \hline
    2         	& Mary		& Doe 		& 6666666		& NYC	& 28	& F \\
        
    \hline
    3        	& Mary		& White	& 7777777		& York 	& 28	& F \\

    \hline
    \end{tabular}
    \end{small} 
\end{table}}

{\renewcommand{\arraystretch}{2}%
\begin{table}[!h]
\centering
    \caption{$PRODUCT$ Relation Sample Data} \begin{small}
    \begin{tabular}{|c|c|c|c|c|}
    \hline
    {\bfseries pid} & {\bfseries  name} & {\bfseries model} & {\bfseries year}  & {\bfseries price}  \\
    \hline
    1         	& PS		& 3		& 2012		& 599,00 	\\
        
    \hline
    2         	& XBOX	& 360 	& 2013		& 799,00	\\
        
    \hline
    3        	& PS		& 4		& 2014		& 899,00 	\\

    \hline
    \end{tabular}
    \end{small} 
\end{table}}

{\renewcommand{\arraystretch}{2}%
\begin{table}[!h]
\centering
    \caption{$BUY$ Relation Sample Data} \begin{small}
    \begin{tabular}{|c|c|c|c|}
    \hline
    {\bfseries cid} & {\bfseries  pid} & {\bfseries date} & {\bfseries quantity}  \\
    \hline
    1         	& 2		& 20140701-16:28:47	& 1		\\
        
    \hline
    1         	& 3		& 20140702-19:07:11	& 2		\\
        
    \hline
    3        	& 3		& 20140703-12:30:05	& 2		\\

    \hline
    \end{tabular}
    \end{small} 
\end{table}}

It should be mentioned that some non-key attributes may be used as a pseudo-key to recover the original relation after decomposition. However, that is due to the distribution of data values and it is not in the scope of our work.

As an application, consider the business development department of the retail store which checks the $BUY$ relation and then investigates the correlation among purchases and customer characteristics as $gender$ and $age$. However, there can be a malicious worker in the department who can share customer details with other stores as customer access information. The attributes $address$ and $phoneNumber$ may be used to present promotions to customer by a different store. In addition to that, $customerId$ need not to be related with $address$ and $phoneNumber$ since the department's major objective is to use $age$ and $gender$ attributes. To prevent this situation, following security dependency sets should be defined.

\vspace{2mm}

\hspace{0.25cm} $\mathcal{S} = \left\lbrace 
\begin{aligned}
&\left\lbrace 
\begin{aligned}
&cid\_buy, \\
&address\_customer
\end{aligned}
\right\rbrace  \\
&\left\lbrace 
\begin{aligned}
&cid\_buy, \\
&pNo\_customer
\end{aligned}
\right\rbrace \\
&\left\lbrace 
\begin{aligned}
&address\_customer, \\
&age\_customer
\end{aligned}
\right\rbrace  \\
&\left\lbrace 
\begin{aligned}
&address\_customer, \\
&gender\_customer
\end{aligned}
\right\rbrace \\
&\left\lbrace 
\begin{aligned}
&pNo\_customer, \\
&age\_customer
\end{aligned}
\right\rbrace \\
&\left\lbrace 
\begin{aligned}
&pNo\_customer, \\
&gender\_customer
\end{aligned}
\right\rbrace
\end{aligned} \right\rbrace$ 

\vspace{2mm}

It is important to note that to give a single security dependency set as 

\vspace{2mm}

\hspace{0.25cm} $\mathcal{S}_{faulty} = \left\lbrace 
\begin{aligned}
&\left\lbrace 
\begin{aligned}
&cid\_buy, \\
&age\_customer, \\
&gender\_customer, \\
&address\_customer, \\
&pNo\_customer
\end{aligned}
\right\rbrace 
\end{aligned} \right\rbrace$ 

\vspace{2mm}

\noindent
is an example of faulty definition since decomposition will not inhibit the inference of association among any two of the attributes.

\vspace{2mm}

To figure out the identifiers, dependencies in $\mathcal{F}^+$ should be reproduced.

\vspace{1.6mm}

\begin{enumerate}
\item
$cid\_buy \rightarrow cid\_customer \in \mathcal{F}$

\vspace{2mm}

\item
$cid\_customer \rightarrow  \left( \begin{aligned} &age\_customer, \\ &gender\_customer, \\  &address\_customer, \\ &pNo\_customer \end{aligned} \right) \in \mathcal{F}$

\vspace{2mm}

\item
$cid\_buy \rightarrow  \left( \begin{aligned} &age\_customer, \\ &gender\_customer, \\  &address\_customer, \\ &pNo\_customer  \end{aligned} \right) \in \mathcal{F}^+$ (transitive property (1, 2))
\end{enumerate}

\vspace{2mm}

As a result of the dependencies in $\mathcal{F}^+$, identifier sets are constructed as follows:

\vspace{2mm}

\begin{itemize}
\item
$i^\mathcal{F}_{age\_customer} = \{\{cid\_buy\}, \{cid\_customer\}\}$

\vspace{2mm}

\item
$i^\mathcal{F}_{gender\_customer} = \{\{cid\_buy\}, \{cid\_customer\}\}$

\vspace{2mm}

\item
$i^\mathcal{F}_{address\_customer} = \{\{cid\_buy\}, \{cid\_customer\}\}$

\vspace{2mm}

\item
$i^\mathcal{F}_{pNo\_customer} = \{\{cid\_buy\}, \{cid\_customer\}\}$

\vspace{2mm}

\item
$i^\mathcal{F}_{cid\_buy} = \{\{cid\_customer\}\}$
\end{itemize}

\vspace{2mm}

Decomposition Algorithm will work as follows for $CUSTOMER$ relation:

\vspace{2mm}

\begin{enumerate}
\item
All subsets of attributes in $CUSTOMER$ relation will be generated ($2^7 = 128$ subsets).
\item
The subsets which will be removed firstly, are the ones containing securely dependent attributes such as:

\vspace{0.6mm}

\begin{itemize}
\item
$\left\lbrace 
cid\_buy, address\_customer
\right\rbrace$

\vspace{0.8mm}

\item
$\left\lbrace 
cid\_buy, pNo\_customer
\right\rbrace$

\vspace{0.8mm}

\item
$\left\lbrace 
address\_customer, age\_customer
\right\rbrace$

\vspace{0.8mm}

\item
$\left\lbrace 
address\_customer, gender\_customer
\right\rbrace$

\vspace{0.8mm}

\item
$\left\lbrace 
pNo\_customer, age\_customer
\right\rbrace$

\vspace{0.8mm}

\item
$\left\lbrace 
pNo\_customer, gender\_customer
\right\rbrace$

\end{itemize}

\vspace{0.6mm}

\item
The subsets which will be removed next, are the ones containing an attribute in a security dependent set with its identifier such as:

\vspace{0.6mm}

\begin{itemize}
\item
$\left\lbrace 
cid\_buy, cid\_customer
\right\rbrace$

\vspace{0.8mm}

\item
$\left\lbrace 
cid\_buy, address\_customer
\right\rbrace$

\vspace{0.8mm}

\item
$\left\lbrace 
cid\_buy, pNo\_customer
\right\rbrace$

\vspace{0.8mm}

\item
$\left\lbrace 
cid\_buy, age\_customer
\right\rbrace$

\vspace{0.8mm}

\item
$\left\lbrace 
cid\_buy, gender\_customer
\right\rbrace$

\vspace{0.8mm}

\item
$\left\lbrace 
cid\_customer, address\_customer
\right\rbrace$

\vspace{0.8mm}

\item
$\left\lbrace 
cid\_customer, pNo\_customer
\right\rbrace$

\vspace{0.8mm}

\item
$\left\lbrace 
cid\_customer, age\_customer
\right\rbrace$

\vspace{0.8mm}

\item
$\left\lbrace 
cid\_customer, gender\_customer
\right\rbrace$

\end{itemize}

\vspace{0.6mm}

\item
Unnecessary subsets which are composed by other subsets, are also removed.
\end{enumerate}

\vspace{2mm}

After decomposition algorithm is applied to the $CUSTOMER$ relation, the following sub-relations are going to be obtained.

\vspace{2mm}

\hspace{0.25cm} $CUSTOMER_1 = \left\lbrace 
\begin{aligned}
&\underline{cid}, \\
&name, \\ 
&surname
\end{aligned} \right\rbrace$

\vspace{2mm}

\hspace{0.25cm} $CUSTOMER_2 = \left\lbrace 
\begin{aligned}
&name, \\ 
&surname, \\
&pNo, \\
&address
\end{aligned} \right\rbrace$

\vspace{2mm}

\hspace{0.25cm} $CUSTOMER_3 = \left\lbrace 
\begin{aligned}
&name, \\ 
&surname, \\
&gender, \\
&age
\end{aligned} \right\rbrace$

\vspace{2mm}

It should be noted that $name$ and $surname$ fields are not keys, if they may perform key-like behavior, then $\mathcal{S}$ should be arranged to compose new security dependent sets considering $name$ and $surname$. However it is not in the scope of this paper to identify key-like behaviour.

{\renewcommand{\arraystretch}{2}%
\begin{table}[!h]
\centering
    \caption{$CUSTOMER_1$ Relation Sample Data} \begin{small}
    \begin{tabular}{|c|c|c|}
    \hline
    {\bfseries cid} & {\bfseries  name} & {\bfseries surname}  \\
    \hline
    1         	& John		& Doe		\\
        
    \hline
    2         	& Mary		& Doe 		 \\
        
    \hline
    3        	& Mary		& White	 \\

    \hline
    \end{tabular}
    \end{small} 
\end{table}}

{\renewcommand{\arraystretch}{2}%
\begin{table}[!h]
\centering
    \caption{$CUSTOMER_2$ Relation Sample Data} \begin{small}
    \begin{tabular}{|c|c|c|c|}
    \hline
    {\bfseries  name} & {\bfseries surname} & {\bfseries pNo}  & {\bfseries address} \\
    \hline
    John		& Doe		& 5555555		& NYC	 \\
        
    \hline
    Mary		& Doe 		& 6666666		& NYC	 \\
        
    \hline
    Mary		& White	& 7777777		& York 	 \\

    \hline
    \end{tabular}
    \end{small} 
\end{table}}

{\renewcommand{\arraystretch}{2}%
\begin{table}[!h]
\centering
    \caption{$CUSTOMER_3$ Relation Sample Data} \begin{small}
    \begin{tabular}{|c|c|c|c|}
    \hline
    {\bfseries  name} & {\bfseries surname} & {\bfseries age}  & {\bfseries gender}          \\
    \hline
    John		& Doe		& 21	& M \\
        
    \hline
    Mary		& Doe 		& 28	& F \\
        
    \hline
    Mary		& White	& 28	& F \\

    \hline
    \end{tabular}
    \end{small} 
\end{table}}

Three decomposed relations are given in Tables IV, V and VI respectively and no decomposition has been made to $BUY$ and $PRODUCT$ relations. In order to be able to find the number of customer at each age, $CUSTOMER_3$ relation can be used. The query is given below:

\vspace{2mm}

{\hspace{0.3cm} \texttt{QUERY-1:}

 \hspace{0.6cm} \texttt{SELECT age, COUNT(*) as 'count'}

\hspace{0.6cm} \texttt{FROM CUSTOMER$_3$}

\hspace{0.6cm} \texttt{GROUP BY age} 

\vspace{2mm}

Moreover, $name$ and $surname$ of the customers who has purchased \textit{"PS"} can be found as;

\vspace{2mm}

\hspace{0.3cm} \texttt{QUERY-2:}

\hspace{0.6cm} \texttt{SELECT c1.name, c1.surname}

\hspace{0.6cm} \texttt{FROM CUSTOMER$_1$ c1, BUY b, PRODUCT p}

\hspace{0.6cm} \texttt{WHERE c1.customerId = b.customerId} 

\hspace{1.4cm} \texttt{and c1.productId = p.productId}

\hspace{1.4cm} \texttt{and b.name = 'PS' }

\vspace{2mm}

The result of these two queries are given in Tables VII and VIII. It should be noted that, any attempt to infer the association among the attributes of each security dependent set cannot be done since the decomposed relations containing the securely dependent attributes cannot be joined on shared identifiers.

{\renewcommand{\arraystretch}{2}%
\begin{table}[!h]
\centering
    \caption{$QUERY_1$ Resulting Relation} \begin{small}
    \begin{tabular}{|c|c|}
    \hline
    {\bfseries  age} & {\bfseries count}         \\
    \hline
    21	& 1 \\
        
    \hline
    28	& 2 \\

    \hline
    \end{tabular}
    \end{small} 
\end{table}}

{\renewcommand{\arraystretch}{2}%
\begin{table}[!h]
\centering
    \caption{$QUERY_2$ Resulting Relation} \begin{small}
    \begin{tabular}{|c|c|}
    \hline
    {\bfseries  name} & {\bfseries surname}         \\
    \hline
    John	& Doe \\
        
    \hline
    Mary	& White \\

    \hline
    \end{tabular}
    \end{small} 
\end{table}}

\section{Future Work}

Being the first attempt in the literature to formalize proactive context dependent attribute based access control, the paper is touching on many different applicational and theoretical researches about the subject. Firstly, the way of usage of this work in the access control mechanisms of database management systems should be investigated. The proposed solution could work as a part of a trusted access control system of a database which has proactive and run-time components. Moreover, security dependent sets are not the only policy rule for database access control, it is also left as a future work for the model and algorithm to be expanded to satisfy all attribute and row based policy rules. By this generalization, performance and scalability should be investigated for new rule types (attribute or row based) about being "proactive", as some modules may be designed to execute in run time. Also it should be noted that, the last property of secure decomposition given in Definition 9 should be revisited if the mechanism employs some run-time components. The property may be relaxed with some run-time work, as all attributes existing in a security dependent set should not be made unidentifiable in advance, since this decision can be made depending on the query during run time. To sum up, the proposed secure decomposition solution in this paper performs in a proactive manner totally, and if future researches will present some run-time work into the mechanism, then this property of secure decomposition will be revisited for relaxation by also taking how performance affected into consideration.

Furthermore, access control strategy should be extended to data insertion or update policies according to the security dependencies. For this issue, implementation alternatives of the proposed solution should be defined, and data modification strategy with proactive and run-time components should be suggested accordingly. In addition to that, the changes done to the secure logical schema during run-time are another challenge for that paper which should be dealt in future. The modification may be an alternation of any relational schema, meanwhile, the functional dependencies may also be changed according to the relational schema. These changes will be expected to cause modifications to security dependent sets and it should be clearly examined how to handle these types of modifications for the proposed model.

Lastly, some practical applications of this work can be proposed in future. One of them is an applicational module, which inputs current external logical schema with security dependent sets of each user role and investigates whether the external schema has any security leak for any defined roles. This application can be useful in current database management systems for a verification of access control mechanism. Beside this, another application can be built to determine the allowed inferences of associations among attributes for any roles, since it is important to clarify what is allowed as what is inhibited in an access control strategy. This application can be used for a cross control for the design and requirements of an application.

\section{Conclusion}

The given theorem, algorithm and examples in this paper aims a construct proactive context dependent attribute based security mechanism schema for database users, using given security dependent sets. The main objective in this work is to prevent inference of association of the attributes in each security dependent set, and this is accomplished by performing a secure decomposition which transforms the relevant logical schema to a secure logical schema for which it is proven to be impossible to infer association among any security dependent set. Furthermore, an algorithm is proposed and proven to perform secure decomposition. It should be noted that all work in this paper are about building the external schema of the database according to the given logical schema (including relational schemas and functional dependencies) and security dependent sets, and it can be implemented independently from conceptual and physical model. As a result, different external schemas for all different roles of users in database has been achieved, and each role can access to the database through a different view from the point of security. By this work, granularity problem for access control methods for databases has been addressed and a formal context dependent proactive access control method has been proposed to be used in access control mechanisms of database management systems.

\bibliographystyle{IEEEtran}

\bibliography{IEEEabrv,IEEEexample}

\section*{Appendices}

The step by step proofs are given below with a brief description of each step.

\subsection*{Proof For Theorem-1}
\begin{proof}

\begin{enumerate}

\item
Assuming $\mathcal{L}$ as a secure logical schema, the formula given in (8)  should be satisfied.

\vspace{2mm}

$\forall \mathcal{S}_i ((\mathcal{S}_i \in \mathcal{S}) \Rightarrow \forall \mathcal{R}_j(\mathcal{R}_j \in \mathcal{R}^+ \Rightarrow \mathcal{S}_i \not\subseteq \mathcal{R}_j))$

\vspace{2.8mm}

\item
Let $\mathcal{S}_i$ be a security dependent set for $\mathcal{L}$. 

\vspace{2mm}

$\mathcal{S}_i \in \mathcal{S}$

\vspace{2.8mm}

\item
Line (1) can be instantiated by using $\mathcal{S}_i$.

\vspace{2mm}

$(\mathcal{S}_i \in \mathcal{S}) \Rightarrow \forall \mathcal{R}_j(\mathcal{R}_j \in \mathcal{R}^+ \Rightarrow \mathcal{S}_i \not\subseteq \mathcal{R}_j)$

\vspace{2.8mm}

\item
When modus ponens is applied using lines (2) and (3).

\vspace{2mm}

$\forall \mathcal{R}_j(\mathcal{R}_j \in \mathcal{R}^+ \Rightarrow \mathcal{S}_i \not\subseteq \mathcal{R}_j)$

\vspace{2.8mm}

\item
Bu using the formula (1) of the property of $\mathcal{R}^+$,  $\mathcal{S}_i$ should not be element of any existing relation $\mathcal{R}$ and there should not exist any attibute set to which $\mathcal{S}_i$ is functionally dependent since according to line (4), $\mathcal{S}_i$ is not a part of any relation in $\mathcal{R}^+$ so any new relation composing $\mathcal{S}_i$ should not be produced by meaningful joins. 

\vspace{2mm}

$\forall \mathcal{R}_k(\mathcal{R}_k \in \mathcal{R} \Rightarrow \mathcal{S}_i \not\subseteq \mathcal{R}_k) \wedge \forall \mathcal{A}_l ( \mathcal{A}_l  \subseteq \mathcal{U_R} \Rightarrow (\mathcal{A}_l  \rightarrow \mathcal{S}_i)\not\in\mathcal{F}^+)$

\vspace{2.8mm}

\item
Assume that the inference of association among the attributes in $\mathcal{S}_i$ can be done. This assumption is the negation of the theorem 1, so proof by contradiction starts here.

\vspace{2mm}

$\mathcal{X}_\mathcal{L}(\mathcal{S}_i)$

\vspace{2.8mm}

\item
According to line (6), the formula (7) states that $\mathcal{S}_i$ should be inferable or subset of any existing relation in $\mathcal{R}$.

\vspace{2mm}

$\exists \mathcal{A}_n (\mathcal{A}_n \overset{\mathcal{F}}{\rightrightarrows} \mathcal{S}_i \wedge \mathcal{A}_n \subseteq \mathcal{U_R}) \vee \exists \mathcal{R}_o ( \mathcal{R}_o \in \mathcal{R} \wedge \mathcal{S}_i \subseteq \mathcal{R}_o)$

\vspace{2.8mm}

\item
In order to contradict the $\vee$ expression in line (7), both sides of $\vee$ should be contradicted. Accordingly, the first assumption is given below as $\mathcal{S}_i$ should be inferable.

\vspace{2mm}

$\exists \mathcal{A}_n (\mathcal{A}_n \overset{\mathcal{F}}{\rightrightarrows} \mathcal{S}_i \wedge \mathcal{A}_n \subseteq \mathcal{U_R})$

\vspace{2.8mm}

\item
Let the expression in line (8) be instantiated using bound variable $\mathcal{A}_n$ denoting a attribute set which infers $\mathcal{S}_i$.

\vspace{2mm}

$\mathcal{A}_n \overset{\mathcal{F}}{\rightrightarrows} \mathcal{S}_i \wedge \mathcal{A}_n \subseteq \mathcal{U_R}$

\vspace{2.8mm}

\item
First $\wedge$ instantiation using line (9).

\vspace{2mm}

$\mathcal{A}_n \overset{\mathcal{F}}{\rightrightarrows} \mathcal{S}_i$

\vspace{2.8mm}

\item
Second $\wedge$ instantiation using line (5).

\vspace{2mm}

$\forall \mathcal{A}_l ( \mathcal{A}_l  \subseteq \mathcal{U_R} \Rightarrow (\mathcal{A}_l  \rightarrow \mathcal{S}_i)\not\in\mathcal{F}^+)$

\vspace{2.8mm}

\item
The universal quantifier in line (11) is instantiated using $\mathcal{A}_n$

\vspace{2mm}

$\mathcal{A}_n  \subseteq \mathcal{U_R} \Rightarrow (\mathcal{A}_n  \rightarrow \mathcal{S}_i)\not\in\mathcal{F}^+$

\vspace{2.8mm}

\item
Second $\wedge$ instantiation using line (9).

\vspace{2mm}

$\mathcal{A}_n  \subseteq \mathcal{U_R}$

\vspace{2.8mm}

\item
When modus ponens is applied using lines (12) and (13).

\vspace{2mm}

$(\mathcal{A}_n  \rightarrow \mathcal{S}_i)\not\in\mathcal{F}^+$

\vspace{2.8mm}

\item
Line (14) can be used to perform modus ponens to the contrapositive of formula (6).

\vspace{2mm}

$\neg(\mathcal{A}_n \overset{\mathcal{F}}{\rightrightarrows} \mathcal{S}_i)$, (14) using formula 6

\vspace{2.8mm}

\item
Lines (10) and (15) are leading to a contradiction.

\vspace{2mm}

$\perp$

\vspace{2.8mm}

\item
First assumption of $\vee$ expression in line (7) in line (8) has been contradicted. Next, the second assumption is given below as $\mathcal{S}_i$ should be a part of an existing relation.

\vspace{2mm}

$\exists \mathcal{R}_o ( \mathcal{R}_o \in \mathcal{R} \wedge \mathcal{S}_i \subseteq \mathcal{R}_o)$

\vspace{2.8mm}

\item
First $\wedge$ instantiation using line (5).

\vspace{2mm}

$\forall \mathcal{R}_k(\mathcal{R}_k \in \mathcal{R} \Rightarrow \mathcal{S}_i \not\subseteq \mathcal{R}_k)$

\vspace{2.8mm}

\item
The existential quantifier in line (17) is instantiated using bounded variable $\mathcal{R}_o$.

\vspace{2mm}

$\mathcal{R}_o \in \mathcal{R} \wedge \mathcal{S}_i \subseteq \mathcal{R}_o$, let $\mathcal{R}_o$ be a bound variable for $\mathcal{R}_o$

\vspace{2.8mm}

\item
Line (18) can be instatiated again by using $\mathcal{R}_o$.

\vspace{2mm}

$\mathcal{R}_o \in \mathcal{R} \Rightarrow \mathcal{S}_i \not\subseteq \mathcal{R}_o$

\vspace{2.8mm}

\item
First $\wedge$ instantiation using line (19).

\vspace{2mm}

$\mathcal{R}_o \in \mathcal{R}$

\vspace{2.8mm}

\item
Second $\wedge$ instantiation using line (19).

\vspace{2mm}

$\mathcal{S}_i \subseteq \mathcal{R}_o$

\vspace{2.8mm}

\item
When modus ponens is applied using lines (20) and (21).

\vspace{2mm}

$\mathcal{S}_i \not\subseteq \mathcal{R}_o$, (20, 21)

\vspace{2.8mm}

\item
Lines (22) and (23) are leading to a contradiction for line (17).

\vspace{2mm}

$\perp$

\vspace{2.8mm}

\item
Lines (16) and (24) are leading to a contradiction for line (7) which means that it is impossible to make an inference of association among the attributes in $\mathcal{S}_i$.

\vspace{2mm}

$\perp$

\vspace{2.8mm}

\item
End of proof by contradiction is reached, hence the theorem holds.

\vspace{2mm}

$\neg\mathcal{X}_\mathcal{L}(\mathcal{S}_i)$\end{enumerate}\end{proof}

\subsection*{Proof For Theorem-2}
\begin{IEEEproof}

\begin{enumerate}

\item
Assume that $\mathcal{L'}$ is a secure logical schema, then it should satisfy the following property given in formula (9).

\vspace{2mm}

$\forall \mathcal{S}_i (\mathcal{S}_i  \in \mathcal{S}) \Rightarrow \neg\mathcal{X}_\mathcal{L'}(\mathcal{S}_i)$

\vspace{3mm}

\item
Let $\mathcal{S}_i$ be a security dependent set for $\mathcal{L'}$. 

\vspace{2mm}

$\mathcal{S}_i \in \mathcal{S}$

\vspace{3mm}

\item
Line (1) can be instantiated by using $\mathcal{S}_i$.

\vspace{2mm}

$(\mathcal{S}_i  \in \mathcal{S}) \Rightarrow \neg\mathcal{X}_\mathcal{L'}(\mathcal{S}_i)$

\vspace{3mm}

\item
When modus ponens is applied using lines (2) and (3).

\vspace{2mm}

$\neg\mathcal{X}_\mathcal{L'}(\mathcal{S}_i)$

\vspace{3mm}

\item
Proof by contradiction begins by assuming negation of line (4) as if $\mathcal{L'}$ is not a secure logical schema, then inference of association among the attributes of a security dependent set  $(\mathcal{S}_i)$ should be possible according to theorem (1).

\vspace{2mm}

$\mathcal{X}_\mathcal{L'}(\mathcal{S}_i)$

\vspace{3mm}

\item
According to line (5), the formula (7) states that $\mathcal{S}_i$ should be inferable or subset of any existing relation in $\mathcal{R}$.

\vspace{2mm}

$\exists \mathcal{A}_n (\mathcal{A}_n \overset{\mathcal{F'}}{\rightrightarrows} \mathcal{S}_i \wedge \mathcal{A}_n \subseteq \mathcal{U_{R'}}) \vee \exists \mathcal{R}_o ( \mathcal{R}_o \in \mathcal{R'} \wedge \mathcal{S}_i \subseteq \mathcal{R}_o)$

\vspace{3mm}

\item
Let the expression in line (6) be instantiated using bound variable $\mathcal{A}_n$ denoting a attribute set which infers $\mathcal{S}_i$ and $\mathcal{R}_o$ denoting the relation which contains $\mathcal{S}_i$.

\vspace{2mm}

$(\mathcal{A}_n \overset{\mathcal{F'}}{\rightrightarrows} \mathcal{S}_i \wedge \mathcal{A}_n \subseteq \mathcal{U_{R'}}) \vee ( \mathcal{R}_o \in \mathcal{R'} \wedge \mathcal{S}_i \subseteq \mathcal{R}_o)$

\vspace{3mm}

\item
In order to contradict the $\vee$ expression in line (7), both sides of $\vee$ should be contradicted. Accordingly, the first assumption is given below.

\vspace{2mm}

$(\mathcal{A}_n \overset{\mathcal{F'}}{\rightrightarrows} \mathcal{S}_i \wedge \mathcal{A}_n \subseteq \mathcal{U_{R'}})$

\vspace{3mm}

\item
First $\wedge$ instantiation using line (8).

\vspace{2mm}

$\mathcal{A}_n \overset{\mathcal{F'}}{\rightrightarrows} \mathcal{S}_i$, (8)

\vspace{3mm}

\item
Formula (6) states that  $\mathcal{S}_i$ should be functionally dependent to an attribute set when the statement in line (9) exists.

\vspace{2mm}

$(\mathcal{A}_n \rightarrow \mathcal{S}_i) \in \mathcal{F'}^+$

\vspace{3mm}

\item
Using formula (11), the dependency in line (10) can be transformed as below.

\vspace{2mm}

$(\mathcal{A}_n \rightarrow \mathcal{S}_i) \in \mathcal{F}^+$

\vspace{3mm}

\item
$\mathcal{S}_i$ in line (11) cannot be contained by $\mathcal{A}_n$ as partial functional dependencies are excluded in definition of $\mathcal{F}^+$ in Definition (3).

\vspace{2mm}

$\mathcal{A}_n \not\supseteq \mathcal{S}_i$

\vspace{3mm}

\item
Using lines (11) and (12), it can be stated that $\mathcal{A}_n$ is an identifier set for $\mathcal{S}_i$ according to formula (3).

\vspace{2mm}

$\mathcal{A}_n \in \mathcal{I}^\mathcal{F}_ {\mathcal{S}_i}$

\vspace{3mm}

\item
If $\mathcal{L'}$ is a secure logical schema, then formula (13) should be satisfied.

\vspace{2mm}

$\forall \mathcal{S}_i \in \mathcal{S}, \forall \mathcal{R}_i\in\mathcal{R'}, \forall \sigma \in \mathcal{S}_i, \nexists \tau \in i^\mathcal{F}_\sigma ((\{\sigma\}\cup\tau)\subseteq\mathcal{R}_i)$

\vspace{3mm}

\item
There should not be any identifier for $\mathcal{S}_i$ according to the contrpositive of Identifiable Property's formula (5) and line (14) since it is prevented for any attribute in $\mathcal{S}_i$ to be in the same relation with an identifier.

\vspace{2mm}

$\mathcal{I}^\mathcal{F}_ {\mathcal{S}_i}=\emptyset$

\vspace{3mm}

\item
$\mathcal{A}_n$ cannot be an identifier according to line (15).

\vspace{2mm}

$\mathcal{A}_n \not\in \mathcal{I}^\mathcal{F}_ {\mathcal{S}_i}$

\vspace{3mm}

\item
Lines (13) and (16) are leading to a contradiction.

\vspace{2mm}

$\perp$

\vspace{3mm}

\item
First assumption of $\vee$ expression in line (7) in line (8) has been contradicted. Next, the second assumption is given below as $\mathcal{S}_i$ should be a part of an existing relation.

\vspace{2mm}

$ (\mathcal{R}_o \in \mathcal{R'} \wedge \mathcal{S}_i \subseteq \mathcal{R}_o)$

\vspace{3mm}

\item
Second $\wedge$ instantiation using line (18).

\vspace{2mm}

$\mathcal{S}_i \subseteq \mathcal{R}_o$

\vspace{3mm}

\item
First $\wedge$ instantiation using line (18).

\vspace{2mm}

$\mathcal{R}_o \in \mathcal{R'}$

\vspace{3mm}

\item
$\mathcal{S}_i$ cannot be part of any relation according to formula (12).

\vspace{2mm}

$\mathcal{S}_i \not\subseteq \mathcal{R}_o$

\vspace{3mm}

\item
Lines (19) and (21) are leading to a contradiction.

\vspace{2mm}

$\perp$

\vspace{3mm}

\item
Lines (17) and (22) are leading to a contradiction for line (5) which means that it is impossible to make an inference of association among the attributes in $\mathcal{S}_i$.

\vspace{2mm}

$\perp$

\vspace{3mm}

\item
End of proof by contradiction is reached, hence the theorem holds.

\vspace{2mm}

$\neg\mathcal{X}_\mathcal{L'}(\mathcal{S}_i)$

\vspace{3mm}

\end{enumerate}
\end{IEEEproof}

\end{document}